\documentclass[aps,pra,reprint,superscriptaddress,amsmath,amssymb,longbibliography]{revtex4-2}
\usepackage[utf8]{inputenc}
\usepackage[T1]{fontenc}
\usepackage{graphicx}
\usepackage{tikz}
\usepackage{xcolor}
\usepackage{colortbl}
\usepackage{amsthm}
\usepackage{times}
\usepackage{physics2}
\usephysicsmodule{braket}
\usepackage[colorlinks=true,citecolor=blue,urlcolor=blue]{hyperref}
\theoremstyle{plain}
\newtheorem{proposition}{Proposition}
\newtheorem{lemma}{Lemma}
\newtheorem{theorem}{Theorem}

\begin{document}
\raggedbottom
\title{Euler's thirty-six officers need not all be entangled:\\ real \texorpdfstring{AME$(4,6)$}{AME(4,6)}, families and classification}
\author{Palash Pandya}
\affiliation{Center for Theoretical Physics, Polish Academy of Sciences, Aleja Lotnik\'{o}w 32/46, 02-668 Warsaw, Poland}
\affiliation{Center for Quantum-Enabled Computing, Center for Theoretical Physics, Polish Academy of Sciences, Aleja Lotnik\'{o}w 32/46, 02-668 Warsaw, Poland}

\begin{abstract}
Absolutely maximally entangled states of four six-level systems, AME$(4,6)$, are the quantum solutions of Euler's
problem of the thirty-six officers.  Until recently all reported solutions were complex, and it was conjectured that no
real orthogonal 2-unitary matrix of order $36$ exists.  We give real AME$(4,6)$ states with $60$ non-zero
coefficients in $\{\pm1,\sqrt2\}/\sqrt{72}$.  Of their thirty-six officers, $12$ are classical, as some must be whenever
fewer than $72$ coefficients are non-zero.  They lie in the $-1$
eigenspace of an order-$4$ local symmetry, and on their support the AME conditions become linear.  We derive two closed-form families of AME$(4,6)$ states that meet at one of these states: a $10$-parameter family that
is supported on $72$ basis states, and an $11$-parameter family on $60$ basis states.  At each of its two ends the first family loses $12$ coefficients, and the AME states on the remaining $60$ basis states
form an $11$-parameter family: the second family at one end, and an analogous family at the other.  These two $60$-term families contain $4096$ real AME states up to diagonal local unitaries, $1024$ of which are limits of the first family.  They fall into exactly $20$ classes under local unitaries and permutations of the parties, and we certify their separation with interval arithmetic.
\end{abstract}
\maketitle

\section{Introduction}

A pure state of $N$ parties with $d$ levels each is absolutely maximally entangled, AME$(N,d)$, if every reduction
to $\lfloor N/2\rfloor$ parties is maximally mixed~\cite{Helwig2012,GALRZ15}.  Such states are perfect tensors.  They
underlie quantum secret sharing~\cite{Helwig2012} and holographic codes~\cite{PYHP2015}, and 2-unitary matrices
are dual-unitary gates of maximal entangling power~\cite{RAL20,BKP2019}.  For four parties, writing $U_{(ij),(kl)}=d\,\psi_{ijkl}$, the state is AME if and only if
$U$ is \emph{2-unitary}: $U$, its realignment $U^R$ and its partial transpose $U^\Gamma$ are all
unitary~\cite{GALRZ15}.  Up to permutations of rows and columns, these three matrices are the reshapings of
$d\,\psi$ along the three bipartitions $12|34$, $13|24$ and $14|23$.

A pair of orthogonal Latin squares of order $d$ gives a 2-unitary permutation matrix, the classical case of the quantum
combinatorial designs of Refs.~\cite{GALRZ15,GRMZ2018}.  Such pairs exist for every
$d\ge3$ except $d=6$~\cite{BSP1960}: Euler's problem of the $36$ officers has no
solution~\cite{Euler1782,Tarry1900}.  AME$(4,2)$ states do not exist at all~\cite{HS2000}.  An AME$(4,6)$ state, a
quantum solution of Euler's problem, was found by Rather \emph{et al.}~\cite{RBB22}.  Further solutions include
continuous families obtained from it~\cite{RRKL23}, 2-unitary complex Hadamard matrices~\cite{BZ24}, constructions
from biunimodular vectors~\cite{Rat24} and hand-made solutions~\cite{GG25}.  No pure stabilizer AME$(4,6)$ state
exists~\cite{WMBA26,Cha26}, and every quantum solution requires entanglement: mutually orthogonal quantum Latin
squares of order six without entanglement do not exist~\cite{BS26}.

All of these solutions are given with complex coefficients.  In the Supplemental Material of Ref.~\cite{RBB22} the authors write:
``Further numerical results suggest that any 2-unitary matrix $U_{36}$ has complex entries.  Therefore, it is
tempting to conjecture that there exists no solution in the set $O(36)$ of orthogonal matrices of this size.''  A
recent review lists the question as open problem (T2)~\cite{RBRLZ26}: ``Is there an AME(4,6) state with real
coefficients? Equivalently, is there an orthogonal 2-unitary matrix of order 36?''
This question has now been answered in the affirmative, independently, by Bruzda~\cite{Bru26},
Rather~\cite{Rat26}, Hontarenko and \.Zyczkowski~\cite{HZ26}, and the present work.  With the permutation solutions for $d\neq2,6$, it follows that a real
AME$(4,d)$ state exists if and only if $d\ge3$, exactly when a complex one does.  We present the following results in this work.
Section~\ref{sec:state} presents our central result: an explicit real AME$(4,6)$ state with $60$ non-zero coefficients, the sparsest AME$(4,6)$ state we know of,
whose 2-unitary representation consists of direct sums of $1\times1$ and $2\times2$ blocks.  Of its $36$ officers, $12$ are
classical, as some must be in any AME$(4,6)$ state with fewer than $72$ non-zero coefficients (Sec.~\ref{sec:discussion}).
Section~\ref{sec:symmetry} describes its origin.  It lies in the $-1$ eigenspace of the local symmetry
$u^{\otimes4}$, $u=\mathrm{diag}(i,-1,-1,-i,-i,-i)$, and we show that the AME conditions are linear on its support.
In Sec.~\ref{sec:families} two continuous families of AME$(4,6)$ states are presented in closed form, with supports on $72$ and $60$ basis states.  The two families meet at the real state.  In Sec.~\ref{sec:real} we show that the families allow a complete classification of the real AME states with the two $60$-term supports at the ends of the $72$-term family: $4096$ states in exactly $20$ classes under local unitaries and party permutations, all of them isolated (numerically).
All statements about our states are verified by self-contained scripts~\cite{repo}.

\section{A real \texorpdfstring{AME$(4,6)$}{AME(4,6)} state}\label{sec:state}

If we index the basis states of each party as $\ket{i}$, where $i = 0,\dots,5$, then we can write the state as an array of Euler's officers where the first two
parties carry rank $a$ and regiment $b$, and the last two parties carry the state $\ket{\phi_{ab}}$ attached to them:
\begin{equation}\label{eq:state}
\ket\psi=\frac{1}{\sqrt{72}}\sum_{a,b=0}^{5}\ket{ab}\otimes\ket{\phi_{ab}}\,.
\end{equation}

\begin{table*}[t]
\caption{The real AME$(4,6)$ state of Eq.~\eqref{eq:state} as Euler's array of officers: row $a$ (rank), column $b$
(regiment), entry $\ket{\phi_{ab}}$, the unnormalized state of the third and fourth party.  Shaded: the $12$ classical
officers $\sqrt2\,\ket{cd}$.  All other entries are two-term states $\pm\ket{cd}\pm\ket{c'd'}$.}
\label{tab:state}
\begin{center}
\renewcommand{\arraystretch}{1.4}\setlength{\tabcolsep}{8pt}
\begin{tabular}{c|cccccc}
$\ket{\phi_{ab}}$ & $b=0$ & $b=1$ & $b=2$ & $b=3$ & $b=4$ & $b=5$\\\hline
$a=0$ & $-\ket{12}+\ket{21}$ & $-\ket{02}+\ket{20}$ & $-\ket{01}+\ket{10}$ & \cellcolor{gray!18}$\sqrt2\,\ket{44}$ & \cellcolor{gray!18}$\sqrt2\,\ket{55}$ & \cellcolor{gray!18}$\sqrt2\,\ket{33}$\\
$a=1$ & $-\ket{01}-\ket{10}$ & $\ket{00}+\ket{54}$ & $\ket{35}+\ket{43}$ & $-\ket{15}-\ket{32}$ & $-\ket{23}+\ket{41}$ & $-\ket{24}+\ket{52}$\\
$a=2$ & $\ket{02}+\ket{20}$ & $-\ket{35}+\ket{43}$ & $\ket{00}-\ket{54}$ & $-\ket{25}+\ket{31}$ & $-\ket{13}+\ket{42}$ & $\ket{14}+\ket{51}$\\
$a=3$ & \cellcolor{gray!18}$\sqrt2\,\ket{34}$ & $-\ket{23}-\ket{41}$ & $\ket{13}+\ket{42}$ & \cellcolor{gray!18}$\sqrt2\,\ket{50}$ & $-\ket{11}+\ket{22}$ & \cellcolor{gray!18}$\sqrt2\,\ket{05}$\\
$a=4$ & \cellcolor{gray!18}$\sqrt2\,\ket{53}$ & $\ket{15}-\ket{32}$ & $\ket{25}+\ket{31}$ & $\ket{11}+\ket{22}$ & \cellcolor{gray!18}$\sqrt2\,\ket{04}$ & \cellcolor{gray!18}$\sqrt2\,\ket{40}$\\
$a=5$ & \cellcolor{gray!18}$\sqrt2\,\ket{45}$ & $-\ket{14}+\ket{51}$ & $\ket{24}+\ket{52}$ & \cellcolor{gray!18}$\sqrt2\,\ket{03}$ & \cellcolor{gray!18}$\sqrt2\,\ket{30}$ & $\ket{12}+\ket{21}$\\
\end{tabular}
\end{center}

\end{table*}
The $36$ states $\ket{\phi_{ab}}$, for one officer each, are listed in Table~\ref{tab:state}.
The $12$ shaded cells are \emph{classical} officers: a single
basis state with coefficient $\sqrt2/\sqrt{72}=1/6$.  The other $24$ are \emph{entangled} officers, each a
superposition of two basis states with coefficients $\pm1/\sqrt{72}$.  The support of $\ket\psi$ thus has
$12+2\cdot24=60$ basis states $\ket x=\ket{x_1x_2x_3x_4}$, where $x_j\in\{0,\dots,5\}$ is the index of party $j$.
With $\psi_x=\braket{x}{\psi}$ we write
\[
P=\{x:\psi_x=1/6\},\qquad E=\{x:\psi_x=\pm1/\sqrt{72}\}
\]
for the $12$ basis states of the classical officers ($P$) and the $48$ of the entangled ones ($E$).
For $x\in P$ all four indices $x_j$ lie in $\{0,3,4,5\}$ and exactly one of them is $0$, e.g.\ $\ket{0344}$.
For $x\in E$ exactly two of them lie in $\{1,2\}$, e.g.\ $\ket{0012}$.

\begin{proposition}\label{prop:state}
$\ket\psi$ is an AME$(4,6)$ state: the reshaped matrix $U_{(ij),(kl)}=6\psi_{ijkl}$ is a real orthogonal 2-unitary matrix of order $36$.
\end{proposition}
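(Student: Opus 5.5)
We verify the three unitarity conditions directly, using the equivalent form that every reduced density matrix of two parties is maximally mixed. For a real state it suffices to check, for each bipartition $\{1,2\}|\{3,4\}$, $\{1,3\}|\{2,4\}$, $\{1,4\}|\{2,3\}$, that the $36\times36$ reshaping $M$ of $6\psi$ satisfies $MM^T=\mathbb 1$. Equivalently, the $36$ (unnormalized) vectors obtained by fixing the pair of indices on one side must be orthonormal. Normalization is immediate: each of the $36$ row-vectors carries weight either $36\cdot(1/6)^2=1$ (a classical officer) or $36\cdot 2/72=1$ (an entangled one). For the other side of the bipartition the count follows from the same structural facts, or from $MM^T=\mathbb 1\Rightarrow M^TM=\mathbb 1$ for square $M$.

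For $12|34$ the check is read directly off Table~\ref{tab:state}. Orthogonality of the rows $\ket{\phi_{ab}}$ reduces to showing that the supports $\{cd\}$ of distinct cells are disjoint, or, where they overlap, that the two cells are of the form $\pm(\ket{cd}+\ket{c'd'})$ and $\pm(\ket{cd}-\ket{c'd'})$. Inspecting the table, the $60$ kets appearing are in fact all distinct, so the first case always holds. Moreover, the $36$ pairs $\{cd\}$ are exactly partitioned into $12$ singletons and $24$ pairs, which gives orthonormality at once.

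For the two remaining bipartitions I would reorganize the $60$ support elements $x$ by the pair $(x_1x_3)$ (resp.\ $(x_1x_4)$) and list, for each of the $36$ values, the contributing complementary pairs $(x_2x_4)$ with signs. The structure stated above makes this manageable. For $x\in P$, all indices lie in $\{0,3,4,5\}$ with exactly one $0$. For $x\in E$, exactly two indices lie in $\{1,2\}$. Each classical element contributes a weight-$1/6$ singleton, and entangled elements pair up.

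The claim to establish, bipartition by bipartition, is that the same phenomenon occurs as for $12|34$. Each row of the reshaped matrix should be either a single entry $1$ or two entries $\pm1/\sqrt2$, giving the promised block structure of $1\times1$ and $2\times2$ blocks. Pairs of rows sharing a two-element support should carry sign patterns forming a $2\times2$ orthogonal block, that is, with opposite relative sign.

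The main obstacle is this sign bookkeeping for the $13|24$ and $14|23$ reshapings. There the table cells are scattered, and one must confirm, for each of the $24$ overlapping pairs, that the relative signs are opposite. I would handle this by exploiting the symmetry $u^{\otimes4}$ of Sec.~\ref{sec:symmetry}, together with the permutation-like structure visible in the table, to group rows into orbits. This reduces the check to a few representative $2\times2$ blocks. Any remaining cases would be checked by a direct finite computation, as done in the accompanying scripts~\cite{repo}.
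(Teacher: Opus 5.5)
\textbf{There is a genuine gap, in the case you treat as easy.} Your overall picture matches the paper's: each reshaping should be a direct sum of $1\times1$ blocks $(1)$ and $2\times2$ blocks proportional to orthogonal sign matrices. But your $12|34$ argument rests on a claim that is false.

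The kets $\ket{cd}$ of parties $3,4$ take only $36$ values. So the $60$ kets in Table~\ref{tab:state} cannot all be distinct, and the rows of $U$ cannot have pairwise disjoint supports, which would need $60$ columns. Your count of ``$12$ singletons and $24$ pairs'' has the same problem. What actually holds is the case you dismissed:
\begin{itemize}
\item The $12$ classical officers use $12$ kets once each.
\item The other $24$ kets split into $12$ pairs, and each pair is the common support of exactly two entangled officers.
\end{itemize}
For example, $(a,b)=(0,0)$ and $(5,5)$ both live on $\{\ket{12},\ket{21}\}$, with signs $(-1,1)$ and $(1,1)$.

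So orthogonality of $U$ does not follow ``at once'' from the supports. You must check $s_1s_3+s_2s_4=0$ on each of these $12$ pairs of cells, and that check is the core of the paper's proof. The sign bookkeeping you place only in $13|24$ and $14|23$ is already the whole content of the $12|34$ case.

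\textbf{The proposed reduction for the other two reshapings does not work.} $u$ is diagonal, so $u^{\otimes4}$ only multiplies each basis state by a phase. It has no nontrivial orbits on rows or blocks, and cannot reduce the number of $2\times2$ blocks to check.

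Falling back on an explicit finite check is legitimate; the paper does the same for $U^R$ and $U^\Gamma$. But you then need to state correctly what is being checked: in each of the three reshapings, $12$ rows carry a single entry $1$, and the other $24$ rows come in $12$ pairs. Each pair is supported on the same two columns, no other row uses those columns, and the two sign patterns are orthogonal.
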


\begin{proof}
We first show that $U$ is orthogonal, i.e.\ that the bipartition $12|34$ is maximally entangled:
$\rho_{12}=UU^T/36=\mathbb I/36$.  The row $(a,b)$ of $U$ is the officer $\ket{\phi_{ab}}/\sqrt2$ of
Table~\ref{tab:state}, written in the basis $\ket{x_3x_4}$ of parties $3,4$.  A classical officer gives a row with one
entry $1$, an entangled officer a row with two entries $\pm1/\sqrt2$.  Every ket $\ket{x_3x_4}$ appears either in
exactly one cell, or in exactly two cells that contain the same two kets.  The rows of two such cells are therefore
supported on the same two columns, which no other row uses, and after permuting rows and columns $U$ is a direct sum of
$12$ blocks $(1)$ and $12$ blocks $\frac1{\sqrt2}\bigl(\begin{smallmatrix}s_1&s_2\\ s_3&s_4\end{smallmatrix}\bigr)$
with $s_i=\pm1$.  For example, $(a,b)=(0,0)$ and $(5,5)$ give
$\frac1{\sqrt2}\bigl(\begin{smallmatrix}-1&1\\ 1&1\end{smallmatrix}\bigr)$.  The two officers of a pair are
orthogonal, $s_1s_3+s_2s_4=0$, so every block, and hence $U$, is orthogonal.

The same argument applies to $U^R$ and $U^\Gamma$, the reshapings along $13|24$ and $14|23$, whose rows collect the
basis states $\ket x$ with given indices $(x_1,x_3)$ or $(x_1,x_4)$.  Each of them again has $12$ rows with one entry
$1$ and $24$ rows with two entries $\pm1/\sqrt2$, paired on the same two columns with orthogonal signs, so both are
direct sums of $12$ blocks $(1)$ and $12$ orthogonal $2\times2$ blocks, whose block lists are part of the exact check
in Ref.~\cite{repo}.  Hence also $\rho_{13}=\rho_{14}=\mathbb I/36$, and the complementary pairs follow because
$\ket\psi$ is pure.  So $U$ is 2-unitary and $\ket\psi$ is AME.
\end{proof}

All one- and two-party marginals have also been verified in exact arithmetic over $\mathbb Q(\sqrt2)$~\cite{repo}.
Fig.~\ref{fig:amplitudes} shows the three reshapings.  In the golden state all
$36$ officers are entangled~\cite{RBB22}.  The classical officers also make $\ket\psi$ as close to a product
state as an AME$(4,6)$ state can be: for every product state $\ket v=\ket{v_1v_2v_3v_4}$,
$|\braket{v}{\psi}|^2\le\langle v_1v_2|\rho_{12}|v_1v_2\rangle=1/36$, so the overlap $|\braket{v}{\psi}|$ is at most $1/6$, with
equality at the $12$ states of $P$.

\begin{figure*}[t]
\centering
\includegraphics[width=\textwidth]{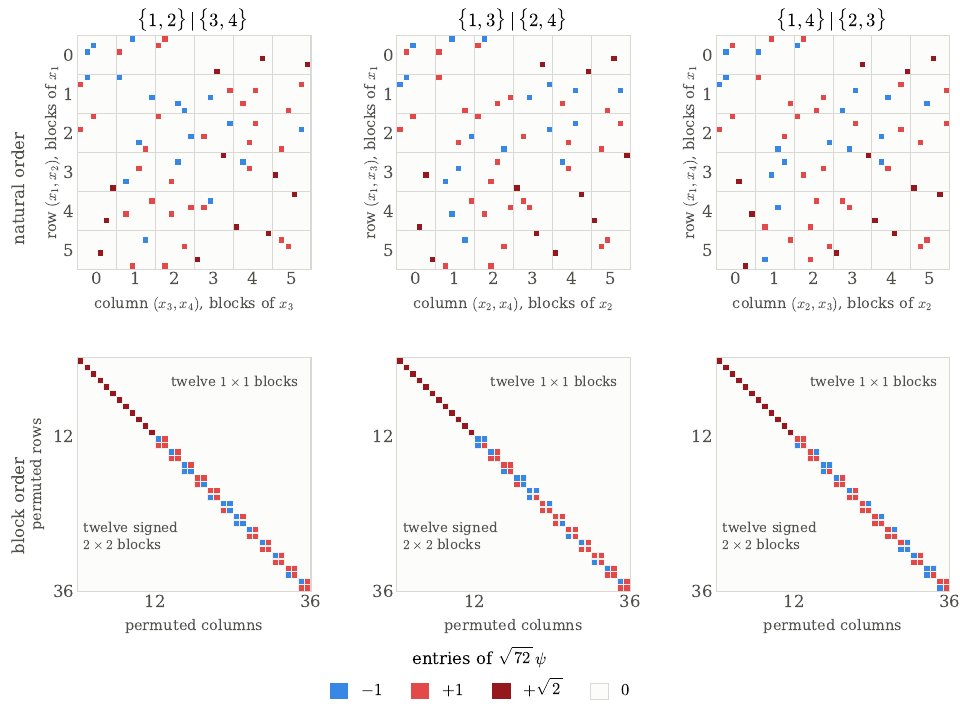}
\caption{The three reshapings of $\sqrt{72}\,\ket\psi$, the real AME$(4,6)$ state, for the bipartitions $12|34$, $13|24$ and $14|23$.  Top: natural
order.  Bottom: rows and columns permuted so that the $12$ entries $\sqrt2$ form $1\times1$ blocks and the $48$
entries $\pm1$ form $12$ blocks of size $2\times2$, each proportional to an orthogonal matrix.}
\label{fig:amplitudes}
\end{figure*}

\section{Symmetry and linearization}\label{sec:symmetry}

The state was obtained as a byproduct of a search
for AME states in eigenspaces
of local symmetries $u^{\otimes4}$, $u=VDV^\dagger$, with $V$ Haar random and $D$ diagonal of finite order.  Two
necessary conditions restrict $D$ and the eigenvalue $\lambda$.  First, if $u^{\otimes4}\ket\Phi=\lambda\ket\Phi$, the
reshaping of $\ket\Phi$ along a cut $A|A^c$ connects the $\mu$-eigenspace of $u^{\otimes|A|}$ only with the
$\lambda\bar\mu$-eigenspace of $u^{\otimes|A^c|}$.  For $|A|=2$ the reshaping is proportional to a unitary, so these
eigenspaces must have equal dimensions for every $\mu$, and for $|A|=1$ the second must be at least as large.
Second, in the eigenbasis of $u$ the populations $|\Phi_x|^2$ must form a probability distribution on the basis states of the
eigenspace with uniform two-party marginals (these are the diagonals of $\rho_{jk}=\mathbb I/36$), which is formulated as a linear feasibility problem.
In particular, with complex unit $i$, take $V=\mathbb I$ and $u=D=\mathrm{diag}(i^{k_0},\dots,i^{k_5})$ with $k=(1,2,2,3,3,3)$.
The $-1$ eigenspace of $u^{\otimes4}$ is spanned by the $360$ basis states $\ket{x_1x_2x_3x_4}$ with $k_{x_1}+\dots+k_{x_4}\equiv2\pmod4$, and a random-start optimization
in it finds AME states within a few restarts.  Most of them have a $10$-dimensional space of first-order deformations
modulo local unitaries (Sec.~\ref{sec:families}), the others a $3$-dimensional one.\footnote{A first-order deformation
of an AME state $\ket\Phi$ is a vector $\ket{\delta\Phi}$ such that the two-party marginals of $\ket\Phi+t\ket{\delta\Phi}$
remain $\mathbb I/36$ to first order in $t$.  For four parties it suffices to check $\rho_{12}$, $\rho_{13}$ and
$\rho_{14}$.  These vectors form a real vector space.  It always contains the tangent space of the local-unitary orbit,
spanned by the vectors $i(H_1\otimes\mathbb I\otimes\mathbb I\otimes\mathbb I+\dots+\mathbb I\otimes\mathbb I\otimes\mathbb
I\otimes H_4)\ket\Phi$ with Hermitian $H_j$, and the dimension of the quotient is invariant under local unitaries.  It
is an upper bound on the number of parameters, modulo local unitaries, of any family of AME states through $\ket\Phi$,
with equality where the rank of the linearized conditions is locally constant.}
Local unitaries that commute with $u$, namely $U(1)\times U(2)\times U(3)$ acting on the basis states
$\ket0$; $\ket1,\ket2$; and $\ket3,\ket4,\ket5$ of each party, preserve the eigenspace.  Minimizing a smoothed $\ell_1$
norm of the coefficients over such changes of local basis can bring a solution with a $10$-dimensional deformation space to exactly $72$
non-zero coefficients. A subsequent Gauss--Newton optimization on that support makes it AME to machine precision~\cite{repo}.  These $72$ basis states form an orthogonal array of strength $2$ and index
$2$~\cite{HSS1999}.
They consist of the $60$ basis states $E\cup P$ of the support of $\ket\psi$ and $12$ further ones, which we call $P'$
(listed in Eq.~\eqref{eq:quartets}).  As for $P$, all four indices of each $x\in P'$ lie in $\{0,3,4,5\}$ and exactly one of them is
$0$.  Altogether, the $72$ basis states lie in the $-1$ eigenspace ($\sum_jk_{x_j}\equiv2\pmod4$ for each), so every state supported on them, including
$\ket\psi$ and both families of Sec.~\ref{sec:families}, is a $-1$ eigenvector of $u^{\otimes4}$.  Modulo an overall
phase on each party, the diagonal local unitaries that act on this support as a scalar form the group
$\mathbb Z_2\times\mathbb Z_2\times\mathbb Z_4$, in which $u^{\otimes4}$ has order $4$.

An irredundant orthogonal array of index $1$ gives an AME state as the uniform superposition of its rows~\cite{GZ2014}.
With index $2$ the phases must also cancel.  In each of the three reshapings, the rows and the columns are labeled by the
indices of two complementary pairs of parties, so strength $2$ and index $2$ mean that every row and every column has exactly two non-zero entries.
Moreover, two rows that share a column share both of their columns, so after permuting rows and columns each reshaping
is a direct sum of $18$ full $2\times2$ blocks~\cite{repo}: the $12$ blocks of $E$ (the \emph{$E$-blocks}) from the
proof of Proposition~\ref{prop:state}, and six \emph{mixed} blocks with two states of $P$ on one diagonal and two of
$P'$ on the other.  Over the three reshapings this gives $36$ $E$-blocks and $18$ mixed blocks.  A
block-diagonal matrix is unitary if and only if each of its blocks is, and for a $2\times2$ block this amounts to
linear equations in the squared moduli and a single linear congruence in the phases: on such a support the AME
conditions linearize.  On the support $E\cup P$ of $\ket\psi$ the six mixed blocks split into $12$ blocks of size
$1\times1$, and the proof of Proposition~\ref{prop:state} is this argument.  Allowing both kinds of blocks gives the
following lemma.

\begin{lemma}\label{lem:blocks}
Let $\ket\Phi=\sum_{x\in S}\Phi_x\ket x$ with $\Phi_x=\sqrt{s_x/72}\,e^{i\varphi_x}$, where $s_x>0$ and
$\varphi_x\in\mathbb R$.  In each of the three reshapings $U$, $U^R$, $U^\Gamma$ of $6\ket\Phi$, along $12|34$,
$13|24$ and $14|23$, the basis state $\ket x$ lies in the row labeled by its indices for the first pair of parties and
in the column labeled by those for the second.  For $x\in S$ in row $r$ and column $c$ we write $s_{rc}=s_x$ and
$\varphi_{rc}=\varphi_x$.  Suppose that in each reshaping the $36$ rows and the $36$ columns can be partitioned into
sets $R_1,\dots,R_m$ and $C_1,\dots,C_m$ with $|R_k|=|C_k|\in\{1,2\}$ such that $S$ consists exactly of the basis
states in $R_1\times C_1,\dots,R_m\times C_m$, the \emph{blocks} of $S$.  Then $\ket\Phi$ is AME if and only if
$s_x=2$ on every $1\times1$ block and every $2\times2$ block, with rows $r,r'$ and columns $c,c'$, satisfies
\begin{equation}\label{eq:congruence}
\begin{gathered}
s_{rc}=s_{r'c'},\quad s_{rc'}=s_{r'c},\quad s_{rc}+s_{rc'}=2,\\
\varphi_{rc}-\varphi_{rc'}-\varphi_{r'c}+\varphi_{r'c'}\equiv\pi\pmod{2\pi}.
\end{gathered}
\end{equation}
In particular, if these moduli equations have a solution with all $s_x>0$, a real AME state with support $S$ exists
if and only if the equations $t_{rc}+t_{rc'}+t_{r'c}+t_{r'c'}=1$, one for each $2\times2$ block, have a solution
$t\in\mathbb F_2^S$.
\end{lemma}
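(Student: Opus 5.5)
The plan is to reduce the AME property to the unitarity of the three reshapings, and then to the unitarity of their individual blocks. As recalled in the introduction, a four-party state is AME if and only if $U$, $U^R$ and $U^\Gamma$ are all unitary. Their entries are $6\Phi_x=\sqrt{s_x/2}\,e^{i\varphi_x}$, placed at (row, column) positions fixed by the chosen bipartition. By hypothesis, in each reshaping $S$ is exactly the union of the blocks $R_k\times C_k$ and the $R_k$, $C_k$ partition the rows and columns. After permuting rows and columns, which preserves unitarity, each reshaping is therefore block diagonal with blocks of size $1$ or $2$. Every entry inside a block is non-zero because $s_x>0$. A block-diagonal matrix is unitary if and only if each diagonal block is, so the AME property becomes the unitarity of every block of every reshaping.

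Next I treat the two block sizes separately. A $1\times1$ block $(\sqrt{s_x/2}\,e^{i\varphi_x})$ is unitary if and only if $s_x=2$. For a $2\times2$ block
\[
\tfrac1{\sqrt2}\begin{pmatrix}\sqrt{s_{rc}}e^{i\varphi_{rc}}&\sqrt{s_{rc'}}e^{i\varphi_{rc'}}\\ \sqrt{s_{r'c}}e^{i\varphi_{r'c}}&\sqrt{s_{r'c'}}e^{i\varphi_{r'c'}}\end{pmatrix},
\]
unitarity means both rows and both columns have unit norm and the two rows are orthogonal. The norm conditions give $s_{rc}+s_{rc'}=2$, $s_{r'c}+s_{r'c'}=2$, $s_{rc}+s_{r'c}=2$ and $s_{rc'}+s_{r'c'}=2$. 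These are equivalent to $s_{rc}=s_{r'c'}$, $s_{rc'}=s_{r'c}$ and $s_{rc}+s_{rc'}=2$. Orthogonality of the rows reads
\[
\sqrt{s_{rc}s_{r'c}}\,e^{i(\varphi_{rc}-\varphi_{r'c})}+\sqrt{s_{rc'}s_{r'c'}}\,e^{i(\varphi_{rc'}-\varphi_{r'c'})}=0 .
\]
Using the moduli equations, both prefactors equal $\sqrt{s_{rc}s_{rc'}}>0$. The condition is therefore $e^{i(\varphi_{rc}-\varphi_{r'c}-\varphi_{rc'}+\varphi_{r'c'})}=-1$, which is exactly the congruence in Eq.~\eqref{eq:congruence}. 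Column orthogonality then follows automatically, since for a square matrix $U^\dagger U=\mathbb I$ is equivalent to $UU^\dagger=\mathbb I$. This proves the equivalence.

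For the real case, I take a positive solution $s$ of the moduli equations and note that a real state with support $S$ has $\varphi_x\in\{0,\pi\}$. I write $\varphi_x=\pi t_x$ with $t\in\mathbb F_2^S$. The congruence then becomes $t_{rc}+t_{rc'}+t_{r'c}+t_{r'c'}=1$ over $\mathbb F_2$, because the signs on $\varphi$ are irrelevant mod $2\pi$ when $\varphi\in\{0,\pi\}$. Conversely, any such $t$ together with the given $s$ satisfies all the conditions, so it yields a real AME state. The moduli and phase conditions are independent, so they can be solved separately.

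I expect the main obstacle to be bookkeeping rather than mathematics. One must confirm that a basis state $x$ carries the same $s_x$ and $\varphi_x$ in all three reshapings, which holds because each entry is just $6\Phi_x$ relabeled. One must also confirm that the blocks of the three reshapings impose their conditions jointly on the single vector $(s_x,\varphi_x)$. The other point to watch is that the orthogonality step uses $s>0$ to divide by the prefactor.
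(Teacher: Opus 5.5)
Your proposal is correct and follows essentially the same route as the paper: reduce AME to unitarity of $U$, $U^R$, $U^\Gamma$, use the block-diagonal structure to reduce to unitarity of each $1\times1$ and $2\times2$ block, get the moduli equations from the row and column norms and the congruence from row orthogonality (using $s>0$), and then write $\varphi_x=\pi t_x$ for the real case. The only difference is that the paper re-derives the AME-iff-unitary step from $\rho_{12}=UU^\dagger/36$ and $\rho_{34}=\overline{U^\dagger U}/36$, noting that this includes the normalization $\braket{\Phi}{\Phi}=1$, whereas you cite that step from the introduction.
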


\begin{proof}
The reduced state of parties $1,2$ is $\rho_{12}=UU^\dagger/36$, and that of parties $3,4$ is
$\rho_{34}=\overline{U^\dagger U}/36$, and $U^R$ and $U^\Gamma$ give $\rho_{13},\rho_{24}$ and $\rho_{14},\rho_{23}$ in
the same way.  Hence $\ket\Phi$ is AME, i.e.\ all six
two-party reductions equal $\mathbb I/36$ (which includes $\braket{\Phi}{\Phi}=1$), if and only if $U$, $U^R$ and
$U^\Gamma$ are unitary.

Fix one reshaping $U$ and its blocks.  The entry of $U$ in row $r$ and column $c$ is $\sqrt{s_{rc}/2}\,e^{i\varphi_{rc}}$
if the basis state there lies in $S$, and $0$ otherwise.  Since the $R_k$ and the $C_k$ partition the rows and the
columns and all entries of $S$ lie in the blocks $R_k\times C_k$, ordering the rows as $R_1,\dots,R_m$ and the columns
as $C_1,\dots,C_m$ makes $U$ block diagonal, with diagonal blocks $B_k$ the submatrices on $R_k\times C_k$.  So $U$ is unitary if and only if every $B_k$ is.  A $1\times1$ block is unitary if and
only if $s_x/2=1$.  A $2\times2$ block $B$ with rows $r,r'$ and columns $c,c'$ is unitary if and only if its rows are
orthonormal:
\[
s_{rc}+s_{rc'}=s_{r'c}+s_{r'c'}=2,\qquad B_{rc}\bar B_{r'c}+B_{rc'}\bar B_{r'c'}=0.
\]
Then $B^\dagger B=\mathbb I$ as well, so also $s_{rc}+s_{r'c}=s_{rc'}+s_{r'c'}=2$.  These four norm equations are
equivalent to the three moduli equations in~\eqref{eq:congruence}.  Under them, the two terms of the orthogonality
condition have the same modulus $\sqrt{s_{rc}s_{rc'}}/2>0$, so their sum vanishes if and only if their phases
$\varphi_{rc}-\varphi_{r'c}$ and $\varphi_{rc'}-\varphi_{r'c'}$ differ by $\pi$ modulo $2\pi$, which is the congruence.
Conversely, the moduli equations and the congruence give orthonormal rows.  So $B$ is unitary if and only if
Eq.~\eqref{eq:congruence} holds.  Exchanging $r$ with $r'$, or $c$ with $c'$, gives an equivalent system, so the
labeling within a block does not matter.  Collecting the conditions of all blocks of the three reshapings proves the
first claim.

The moduli equations involve only $s$ and the congruences only $\varphi$, so any positive solution $s$ and any
solution $\varphi$ together give an AME state with support $S$, and every such state arises in this way.  It is real
if and only if all $\varphi_x\in\{0,\pi\}$.  Writing $\varphi_x=\pi t_x$ with $t_x\in\{0,1\}$, the congruence becomes
$t_{rc}-t_{rc'}-t_{r'c}+t_{r'c'}\equiv1\pmod2$, which is the stated equation over $\mathbb F_2$.
\end{proof}

\section{Two families through the real state}\label{sec:families}

We now describe, up to diagonal local unitaries, all AME states with support exactly $E\cup P\cup P'$, $E\cup P$ or
$E\cup P'$.  Let $\varepsilon_x=\pm1$ be the sign of $\psi_x$ for $x\in E$.  For $\delta\in[-1,1]$ and phases
$\alpha_x$ ($x\in P$) and $\beta_x$ ($x\in P'$) let
\begin{multline}\label{eq:family}
\ket{\Psi_\delta(\alpha,\beta)}=\frac1{\sqrt{72}}\Big[\sum_{x\in E}\varepsilon_x\ket x\\
+\sqrt{1+\delta}\sum_{x\in P}e^{i\alpha_x}\ket x+\sqrt{1-\delta}\sum_{x\in P'}e^{i\beta_x}\ket x\Big].
\end{multline}
At $\delta=1$ the coefficients on $P'$ vanish and we write $\ket{\Psi_1(\alpha)}$, since the phases $\beta$ play no
role.  At $\delta=-1$ those on $P$ vanish, and we write $\ket{\Psi_{-1}(\beta)}$.  A number in place of
$\alpha$ or $\beta$ means that all phases on $P$, or all phases on $P'$, have that value.  Thus, the real state in Eq.~\eqref{eq:state} is $\ket\psi=\ket{\Psi_1(0)}$,
while the other end with all phases zero is a different real AME state,
\begin{equation}\label{eq:psiminus}
\ket{\psi_-}=\ket{\Psi_{-1}(0)}=\frac1{\sqrt{72}}\sum_{x\in E}\varepsilon_x\ket x+\frac16\sum_{x\in P'}\ket x ,
\end{equation}
with the $12$ coefficients $1/6$ on $P'$ instead of $P$.  No local unitary and permutation of the parties maps
$\ket\psi$ to $\ket{\psi_-}$ (Theorem~\ref{thm:classes}), and no permutation of the parties and of the basis states
of each party maps the support $E\cup P$ onto $E\cup P'$.  Each of the
$18$ mixed blocks has two states $x,y\in P$ on one diagonal and two states $x',y'\in P'$ on
the other.  For $|\delta|<1$ the moduli of Eq.~\eqref{eq:family} satisfy the moduli equations of
Lemma~\ref{lem:blocks} on every block, and the $E$-blocks carry the coefficients of $\ket\psi$ which satisfy their respective
phase congruences.  The only remaining AME conditions are therefore the phase congruences~\eqref{eq:congruence} of the
mixed blocks, which read
\begin{equation}\label{eq:mixed}
\alpha_x+\alpha_y\equiv\beta_{x'}+\beta_{y'}+\pi\pmod{2\pi}.
\end{equation}
The $18$ pairs $\{x,y\}$ are the pairs of states within three quartets $P_1,P_2,P_3$ that partition $P$, the $18$
pairs $\{x',y'\}$ are those within three quartets $P'_1,P'_2,P'_3$ that partition $P'$, and each pair occurs in exactly
one mixed block:
\begin{equation}\label{eq:quartets}
\begin{aligned}
P_1&=\{0344,3505,4053,5430\},\\
P_2&=\{0455,3034,4540,5303\},\\
P_3&=\{0533,3350,4404,5045\},\\
P'_1&=\{0353,3530,4044,5405\},\\
P'_2&=\{0434,3055,4503,5340\},\\
P'_3&=\{0545,3304,4450,5033\}.
\end{aligned}

\end{equation}
The mixed blocks join the quartets in a regular way: for all $k$ and $l$, exactly two of them contain states of $P_k$
and $P'_l$, and their pairs split $P_k$ into two pairs $p,\bar p$ and $P'_l$ into two pairs $q,\bar q$, where partner
pairs have the same first indices~\cite{repo}.  With $\alpha(p)$ the sum of $\alpha$ over $p$, Eq.~\eqref{eq:mixed}
for these two blocks reads
\begin{equation}\label{eq:partition}
\alpha(p)\equiv\beta(q)+\pi,\qquad\alpha(\bar p)\equiv\beta(\bar q)+\pi.
\end{equation}
We call the AME states with support exactly $E\cup P$ or $E\cup P'$ the two \emph{faces},
$\mathcal F_{60}$ and $\mathcal F_{60}'$.  Eq.~\eqref{eq:family} covers both faces at $\delta=\pm1$, but a face is more
than the set of limits of the states with $|\delta|<1$: by Proposition~\ref{prop:families}(c), the AME states with
support $E\cup P\cup P'$ reach each face only along a $9$-dimensional subtorus.

\begin{figure*}[t]
\centering
\includegraphics[width=\textwidth]{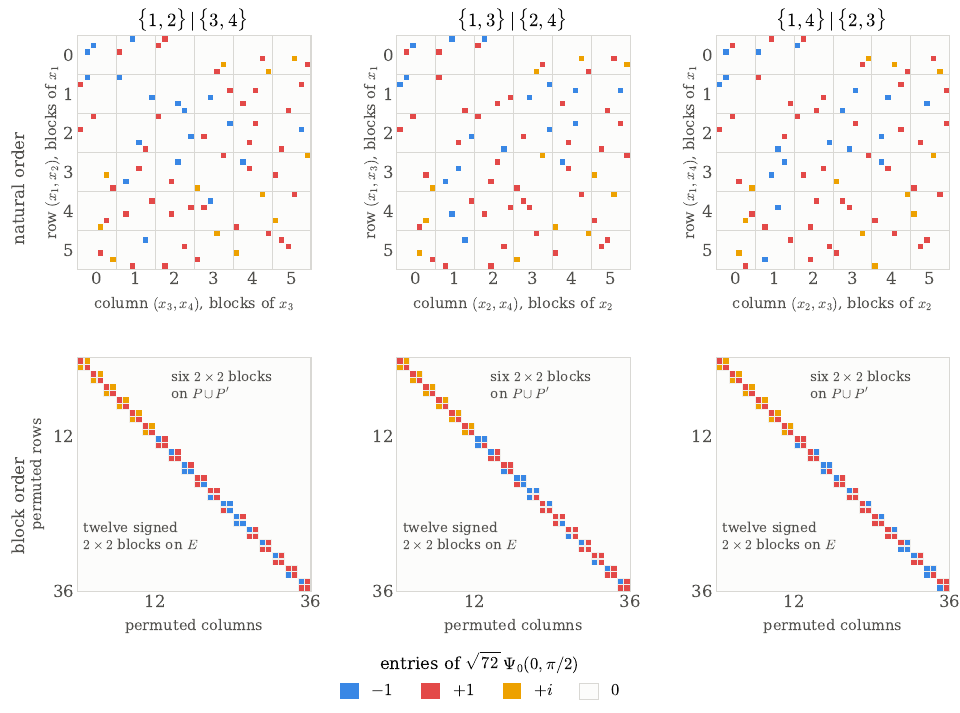}
\caption{The three reshapings of $\sqrt{72}\,\ket{\Psi_0(0,\pi/2)}$, a uniform-amplitude state of $\mathcal F_{72}^+$, for the
bipartitions $12|34$, $13|24$ and $14|23$.  Top: natural order.  Bottom: rows and columns permuted into $18$ blocks
of size $2\times2$: six with the entries $+1$ of $P$ on the diagonal and $+i$ of $P'$ off it, then the $12$ blocks
of $E$, ordered as in Fig.~\ref{fig:amplitudes}.  As $\delta\to1$ the $P'$ entries vanish, the $P$ entries grow to
$\sqrt2$, and the picture becomes Fig.~\ref{fig:amplitudes}.}
\label{fig:amplitudes72}
\end{figure*}

\begin{figure}[tb]
\centering
\begin{tikzpicture}[x={(0.56cm,0cm)},y={(0.30cm,0.17cm)},z={(0cm,1.45cm)},line join=round,font=\footnotesize]
  \def\b{2.0}
  \def\ends{1.2/9,3.2/9,4.8/10,6.8/9}
  \draw[->,gray] (-2.6,0,-1.3) -- (-2.6,0,1.4) node[above] {$\delta$};
  \foreach \z in {1,0,-1} { \draw[gray] (-2.75,0,\z) -- (-2.45,0,\z) node[pos=0,left] {$\z$}; }
  \fill[teal!12] (-1.4,-1.8,-1) -- (9.4,-1.8,-1) -- (9.4,2.8,-1) -- (-1.4,2.8,-1) -- cycle;
  \draw[teal!55!black] (-1.4,-1.8,-1) -- (9.4,-1.8,-1) -- (9.4,2.8,-1) -- (-1.4,2.8,-1) -- cycle;
  \node[teal!40!black,anchor=north east] at (9.4,-1.8,-1) {$\mathcal F_{60}'$};
  \draw[very thick] (0,0,-1) -- (8,0,-1);
  \node[anchor=west] at (8.05,0,-1.06) {$\mathcal C'$};
  \fill[blue!9] plot[domain=1:-1,variable=\t,samples=31] (8,{\b*(1-\t*\t)},\t)
                -- plot[domain=-1:1,variable=\t,samples=31] (0,{\b*(1-\t*\t)},\t) -- cycle;
  \foreach \xx in {0,8} { \draw[blue!45] plot[domain=-1:1,variable=\t,samples=31] (\xx,{\b*(1-\t*\t)},\t); }
  \draw[blue!45,dashed] (0,\b,0) -- (8,\b,0);
  \foreach \e/\c in \ends { \draw[blue!60!black,thick] plot[domain=1:-1,variable=\t,samples=31]
                             ({4+(\e-4)*(1-\t)/2},{\b*(1-\t*\t)},\t); }
  \node[blue!55!black,anchor=west] at (8.1,\b,0.15) {$\mathcal F_{72}^-$};
  \fill[red!10,opacity=0.75] plot[domain=1:-1,variable=\t,samples=31] (8,{-\b*(1-\t*\t)},\t)
                -- plot[domain=-1:1,variable=\t,samples=31] (0,{-\b*(1-\t*\t)},\t) -- cycle;
  \foreach \xx in {0,8} { \draw[red!55] plot[domain=-1:1,variable=\t,samples=31] (\xx,{-\b*(1-\t*\t)},\t); }
  \draw[red!55,dashed] (0,-\b,0) -- (8,-\b,0);
  \foreach \e/\c in \ends { \draw[red!70!black,thick] plot[domain=1:-1,variable=\t,samples=31]
                             ({4+(\e-4)*(1-\t)/2},{-\b*(1-\t*\t)},\t); }
  \node[red!60!black,anchor=east] at (-0.1,-\b,-0.35) {$\mathcal F_{72}^+$};
  \draw[densely dotted,very thick] plot[domain=0:0.97,variable=\t,samples=21] ({0.9+3.1*\t},{-\b*(1-\t*\t)},\t);
  \fill (0.9,-\b,0) circle (1.6pt);
  \node[anchor=north,align=center,font=\scriptsize,inner sep=1pt] at (1.25,-\b,-0.05) {four\\imaginary};
  \fill[orange!20,opacity=0.6] (-1.4,-1.8,1) -- (9.4,-1.8,1) -- (9.4,2.8,1) -- (-1.4,2.8,1) -- cycle;
  \draw[orange!65!black] (-1.4,-1.8,1) -- (9.4,-1.8,1) -- (9.4,2.8,1) -- (-1.4,2.8,1) -- cycle;
  \node[orange!50!black,anchor=south east] at (9.4,2.8,1) {$\mathcal F_{60}$};
  \draw[very thick] (0,0,1) -- (8,0,1);
  \node[anchor=west] at (8.05,0,1.06) {$\mathcal C$};
  \fill (4,0,1) circle (2pt) node[above=1pt] {$\ket\psi$};
  \foreach \e/\c in \ends { \fill (\e,0,-1) circle (1.6pt) node[below=1pt] {$\c$}; }
  \draw[gray] ([xshift=-4pt,yshift=-9pt]1.2,0,-1) -- ++(-0.8cm,-0.3cm) node[left,black,inner sep=1pt] {$\ket{\psi_-}$};
\end{tikzpicture}
\caption{The families of Proposition~\ref{prop:families}, with $\delta$ running vertically.  The faces $\mathcal F_{60}$
($\delta=1$) and $\mathcal F_{60}'$ ($\delta=-1$) are $11$-dimensional phase tori.  The components $\mathcal F_{72}^+$
(front) and $\mathcal F_{72}^-$ (back) are $10$-dimensional and complex conjugate to each other.  Both join the
$9$-torus $\mathcal C\subset\mathcal F_{60}$, on which the sums of $\alpha$ over $P_1$, $P_2$ and $P_3$ agree, to the
corresponding $9$-torus $\mathcal C'\subset\mathcal F_{60}'$.  The eight branches through $\ket\psi$, four in each
component, cross the slice $\delta=0$ of uniform amplitudes (dashed).  Up to diagonal local unitaries they end at four
real states, each reached by one branch from each component: $\ket{\psi_-}$ and the states obtained from it by negating
the coefficients on $P'_1$, $P'_2$ or $P'_3$ (numbers: their classes, Theorem~\ref{thm:classes}).  Dotted: a path in
$\mathcal F_{72}^+$ from a uniform state with four imaginary coefficients to $\ket\psi$.}
\label{fig:families}
\end{figure}

\begin{proposition}\label{prop:families}
(a) Up to diagonal local unitaries, the AME states with support exactly $E\cup P\cup P'$ are the states
$\ket{\Psi_\delta(\alpha,\beta)}$ with $|\delta|<1$ that satisfy Eq.~\eqref{eq:mixed}.  They form two components
$\mathcal F_{72}^\pm\ni\ket{\Psi_\delta(0,\pm\pi/2)}$, exchanged by complex conjugation, with $10$ parameters each.\\
(b) Up to diagonal local unitaries, the AME states with support exactly $E\cup P$ are the states
$\ket{\Psi_1(\alpha)}$, with $11$ parameters, and those with support exactly $E\cup P'$ are the states
$\ket{\Psi_{-1}(\beta)}$.\\
(c) $\ket{\Psi_1(\alpha)}$ is a limit of $\mathcal F_{72}^\pm$ if and only if the sums of $\alpha$ over $P_1$, $P_2$
and $P_3$ agree modulo $2\pi$, and $\ket{\Psi_{-1}(\beta)}$ is one if and only if the sums of $\beta$ over $P'_1$,
$P'_2$ and $P'_3$ agree.\\
(d) No state of $\mathcal F_{72}^\pm$ is real, even up to diagonal local unitaries.
\end{proposition}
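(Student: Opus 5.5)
The whole proof runs through Lemma~\ref{lem:blocks}, applied to each of the three supports $E\cup P\cup P'$, $E\cup P$ and $E\cup P'$. On each support every reshaping is block diagonal: $18$ blocks of size $2\times2$ on the $72$-term support, and on the $60$-term supports the six mixed blocks of each reshaping split into $1\times1$ blocks. So the AME conditions become moduli equations plus phase congruences, and the plan is to solve these two linear systems explicitly and then quotient by diagonal local unitaries.

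\emph{Moduli.} On an $E$-block the equations $s_{rc}=s_{r'c'}$, $s_{rc'}=s_{r'c}$ and $s_{rc}+s_{rc'}=2$ leave one parameter per block. On the support $E\cup P\cup P'$ I would show that the graph linking $E$-blocks across the three reshapings is connected. This is a finite check on the explicit support, done in Ref.~\cite{repo}. A connected graph forces all $s_x=1$ on $E$. On a mixed block the same equations give $s=1+\delta_b$ on its $P$-diagonal and $1-\delta_b$ on its $P'$-diagonal. Since each state of $P$ lies in three mixed blocks, and these blocks link all quartets, the regularity of the quartets forces a single common $\delta$, with $|\delta|<1$ required for positivity. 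On the faces the $1\times1$ blocks force $s=2$ directly. This recovers exactly the moduli pattern of Eq.~\eqref{eq:family}.

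\emph{Phases.} The congruences are affine equations over $\mathbb R/2\pi\mathbb Z$. On $E$ I would show that the $36$ $E$-block congruences, modulo the phase changes $\varphi_x\mapsto\varphi_x+\sum_j\theta_j(x_j)$ from diagonal local unitaries, admit only $\varepsilon_x$ up to gauge. Concretely, the solution set is a coset of the space of gauge phases restricted to $E$. I would verify this with a rank computation over $\mathbb Q$ together with a Smith normal form, to control torsion. The same gauge count then fixes the residual freedom on $P$ and $P'$. That freedom is the subgroup of gauge transformations acting trivially on $E$, which is where the group $\mathbb Z_2\times\mathbb Z_2\times\mathbb Z_4$ and its relatives enter. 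The result is the statement that, up to local unitaries, the faces carry all $12$ phases $\alpha$ modulo a $1$-dimensional residual gauge, giving an $11$-torus for (b). On the $72$-term support the only remaining conditions are Eq.~\eqref{eq:mixed}, that is, Eq.~\eqref{eq:partition}.

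\emph{Counting for (a).} The parameters are $\delta$ plus $24$ phases, minus the residual gauge, minus the rank of the $18$ mixed congruences. Summing the two equations of Eq.~\eqref{eq:partition} gives $\alpha(P_k)\equiv\beta(P'_l)+2\pi\equiv\beta(P'_l)$ for all $k,l$. Hence all six quartet sums share a common value, which is consistent. The solution set is a torus coset, and its component group is read off from the Smith form; I expect exactly two cosets. The points $\Psi_\delta(0,\pm\pi/2)$ satisfy $0\equiv\pm\pi+\pi$, and they lie in different cosets because their quartet-sum invariants differ (for instance $\beta(q)$ versus $\alpha(p)-\pi$ on half-pairs). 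Complex conjugation sends $\beta\mapsto-\beta$ and swaps the two cosets. The dimension count is $1+24-1-14=10$, after checking that the rank is $14$.

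\emph{Limits for (c).} Letting $\delta\to1$ discards $\beta$. So $\alpha$ is a limit if and only if some $\beta$ solves Eq.~\eqref{eq:partition}. The necessary condition is that the three sums $\alpha(P_k)$ agree. For sufficiency I would construct $\beta$ explicitly: choose $\beta$ on each $P'_l$ so that $\beta(q)=\alpha(p)-\pi$ for the pairs $p$ attached to $P'_l$. This system is solvable exactly when the quartet sums match, since each $P'_l$ has $4$ free phases and the constraints on pair sums within it are consistent given equal totals. Closedness of the family then gives the limit. The statement for $\beta$ follows by symmetry.

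\emph{Reality for (d).} This is the main obstacle, because reality has to be tested up to gauge. The plan is to use Lemma~\ref{lem:blocks}: a real representative would give $t\in\mathbb F_2^S$ solving every congruence with $\varphi=\pi t$. Reducing the $18$ mixed congruences mod $2$ and summing over the two blocks joining $P_k$ and $P'_l$ gives $t(P_k)+t(P'_l)\equiv 0$, and I would then seek a parity contradiction from an odd cycle. The catch is that the summation argument above already showed the system is consistent over $\mathbb R$. So the obstruction must instead come from the $\pi$'s: in the $\mathbb F_2$ system, after the $E$-part is fixed by the gauge, the mixed equations carry an inhomogeneous $1$ that a parity count over all $18$ equations cannot absorb. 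I would verify this by Gaussian elimination over $\mathbb F_2$ on the full $54$-equation system. Alternatively, I would note that the two components are swapped by conjugation, and a real state would be fixed by conjugation, so it would have to lie in both components, which is impossible. This second argument is cleaner, and I would try it first.
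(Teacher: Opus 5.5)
Your skeleton matches the paper's: Lemma~\ref{lem:blocks}, exact rank and Smith-form computations, gauge-fixing the phases on $E$ to $\varepsilon_x$, and solving for $\beta$ quartet by quartet in (c). The genuine gap is that you never produce an invariant that separates the two tori, and the one you suggest does not work. Eq.~\eqref{eq:partition} forces $\beta(q)\equiv\alpha(p)-\pi$ on \emph{every} solution. At $(\alpha,\beta)=(0,\pm\pi/2)$ all pair sums of $\beta$ are $\equiv\pi$ and all quartet sums are $\equiv0$, so nothing you list distinguishes the two base points. The Smith form says there are two tori, but not which base point lies in which. Nor does it show that no diagonal local unitary identifies them: your residual gauge contains the discrete sign change of $P_3\cup P'$, which you would have to check. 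The missing idea is
\[
\omega=\sum_{x\in P',\,x_1\neq0}\varphi_x-\sum_{x\in P,\,x_1\neq0}\varphi_x .
\]
Partner pairs in a mixed block share their first indices. So the nine mixed blocks with no state of first index $0$ contain each of these $18$ states exactly twice, and adding their congruences~\eqref{eq:mixed} gives $2\omega\equiv\pi$. Hence $\omega\equiv\pm\pi/2$ on every AME state with support $E\cup P\cup P'$. Being continuous with discrete values, it is invariant under the connected group of diagonal local unitaries. It is negated by complex conjugation, and it equals $\pm9\pi/2$ at the two base points.

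The rest of your plan also needs this invariant without saying so. In (c) you obtain eight solutions $\beta$. To show that $\ket{\Psi_1(\alpha)}$ is a limit of \emph{each} of $\mathcal F_{72}^\pm$, you need that adding $\pi$ on one quartet $P'_l$ changes $\omega$ by $3\pi$, so four of the eight lie in each component. Your preferred argument for (d) is circular as written. Reality is only required up to diagonal local unitaries, so ``fixed by conjugation, hence in both components'' presupposes exactly the separation invariant under diagonal local unitaries that you have not established. With $\omega$ it is one line, since a real representative would have $\omega\in\{0,\pi\}$. Your fallback, Gaussian elimination over $\mathbb F_2$, is valid by the last clause of Lemma~\ref{lem:blocks}, because diagonal local unitaries preserve the support and the AME property. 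But the contradiction comes from the nine blocks above, whose sum reads $0=1$. It does not come from all $18$, whose sum only gives $\sum_{x\in P\cup P'}t_x=0$. A smaller point: connectivity of the linking graph of $E$-blocks only reduces the moduli on $E$ to a pattern $s\in\{a,2-a\}$. Forcing $a=1$ needs a frustrated cycle, which is what the paper's rank-$48$ computation certifies.
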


\begin{proof}
By Lemma~\ref{lem:blocks}, everything reduces to linear congruences in the phases.  The ranks and invariant factors
quoted below are computed exactly in Ref.~\cite{repo}.  We use two facts.  (i) If a system of integer linear
congruences in $n$ phases has rank $\nu$ and invariant factors $d_1,\dots,d_\nu$ (the non-zero entries of the Smith
normal form of its matrix), its solutions, if any, form $d_1\cdots d_\nu$ disjoint $(n-\nu)$-dimensional tori.
(ii) In each reshaping, a diagonal local unitary adds to $\varphi_x$ a function of the row of $x$ plus a function of
its column, which cancels in every congruence~\eqref{eq:congruence}.

To tell the two components apart we use
\[
\omega=\sum_{x\in P',\,x_1\neq0}\varphi_x-\sum_{x\in P,\,x_1\neq0}\varphi_x .
\]
The nine mixed blocks without a state of first index $0$ contain each of these $18$ states exactly twice, so adding
their congruences in the form~\eqref{eq:mixed} gives $2\omega\equiv\pi$: every AME state with support $E\cup P\cup P'$
has $\omega\equiv\pm\pi/2$.  Diagonal local unitaries form a connected group that maps such states to such states, so
they cannot change this discrete value, while complex conjugation reverses it.

(a) The moduli equations of the $E$-blocks have rank $48$, so $s=1$ on $E$.  On a mixed block they give $s_x=s_y$,
$s_{x'}=s_{y'}$ and $s_x+s_{x'}=2$.  Every pair within a quartet lies in a mixed block, so $s$ is constant on each
quartet, and, as noted before Eq.~\eqref{eq:partition}, every $P_k$ shares a mixed block with every $P'_l$, so $s(P_k)+s(P'_l)=2$.
Hence $s=1+\delta$ on $P$ and $1-\delta$ on $P'$, with $|\delta|<1$ as $s>0$.  The $36$ congruences of the
$E$-blocks have rank $28$ in $48$ phases and invariant factors $1$, so by (i) their solutions form one $20$-dimensional
torus.  By (ii) it contains the phases of $\ket\psi$ shifted by any diagonal local unitary, which also form a
$20$-dimensional torus, so the two coincide.  Hence a diagonal local unitary brings every AME state with this support to some
$\ket{\Psi_\delta(\alpha,\beta)}$, and such a state is AME if and only if Eq.~\eqref{eq:mixed} holds.  This system has
rank $14$ in $24$ phases and invariant factors $1,\dots,1,2$, so by (i) its solutions form two disjoint
$10$-dimensional tori.  At $(\alpha,\beta)=(0,\pm\pi/2)$, $\omega=\pm9\pi/2\equiv\pm\pi/2$, so $\omega$ tells the two
tori apart.  Together with $\delta$ they give $\mathcal F_{72}^\pm$, on which $\omega\equiv\pm\pi/2$.  Complex
conjugation, $(\alpha,\beta)\mapsto(-\alpha,-\beta)$, exchanges them, and no diagonal local unitary does.  A diagonal
local unitary that maps a state of the family to another must fix the coefficients on $E$, and on $P\cup P'$ it then acts
by a common phase, possibly combined with the sign change of $P_3\cup P'$.  The common phase removes one of the $10$
phase dimensions, leaving $9$ parameters besides $\delta$.

(b) On $E\cup P$ the $1\times1$ blocks give $s=2$ on $P$ and no phase condition, so the states are
$\ket{\Psi_1(\alpha)}$ for all $\alpha$, and the common phase leaves $11$ parameters.  The same holds on $E\cup P'$.

(c) Let $A_k$ and $B_l$ be the sums of $\alpha$ over $P_k$ and of $\beta$ over $P'_l$.  Adding the two
congruences~\eqref{eq:partition} gives $A_k\equiv B_l$, so $A_1\equiv A_2\equiv A_3$ on $\mathcal F_{72}^\pm$, and also
on its limits, since $\ket{\Psi_{\delta_n}(\alpha_n,\beta_n)}\to\ket{\Psi_1(\alpha)}$ forces $\alpha_n\to\alpha$.
Conversely, let $A_1\equiv A_2\equiv A_3$ and $P'_l=\{w,x,y,z\}$.  Each of its six pairs lies in one mixed block, so
Eq.~\eqref{eq:mixed} prescribes the six pair sums $\sigma_{wx}\equiv\beta_w+\beta_x$, and so on.  By
Eq.~\eqref{eq:partition}, the sums of complementary pairs, $\sigma_{wx}+\sigma_{yz}$, $\sigma_{wy}+\sigma_{xz}$ and $\sigma_{wz}+\sigma_{xy}$, are
$A_1$, $A_2$, $A_3$ in some order, hence equal.  Therefore the solutions on $P'_l$ are exactly
$2\beta_w\equiv \sigma_{wx}+\sigma_{wy}-\sigma_{xy}$ and $\beta_x\equiv \sigma_{wx}-\beta_w$, $\beta_y\equiv \sigma_{wy}-\beta_w$,
$\beta_z\equiv \sigma_{wz}-\beta_w$: two solutions, differing by $\pi$ on all of $P'_l$, and eight in total.  Adding $\pi$ on
one quartet changes $\omega$ by $3\pi$, so four of the eight lie in each component, and for each of them
$\ket{\Psi_\delta(\alpha,\beta)}\to\ket{\Psi_1(\alpha)}$ as $\delta\to1$.  At $\delta=-1$ the same holds with $\alpha$
and $\beta$ exchanged.

(d) If $D\ket\Phi$ were real for some $\ket\Phi$ in $\mathcal F_{72}^\pm$ and a diagonal local unitary $D$, it would
have $\omega\in\{0,\pi\}$.  But it is AME with support $E\cup P\cup P'$, so $\omega\equiv\pm\pi/2$.
\end{proof}

\emph{Branches through the real state.}  At $\alpha=0$ all pair sums in the proof of (c) are $\sigma\equiv\pi$, so $\beta$
is constant on each quartet $P'_l$ with $2\beta\equiv\pi$, i.e.\ $\beta=\eta_l\pi/2$ with $\eta_l=\pm1$.  The eight
straight lines $\delta\mapsto\ket{\Psi_\delta(0,\beta)}$ are the \emph{branches} through $\ket\psi=\ket{\Psi_1(0)}$.
Since $\omega=3(\eta_1+\eta_2+\eta_3)\pi/2\equiv\eta_1\eta_2\eta_3\pi/2$, four lie in $\mathcal F_{72}^+$ and four in
$\mathcal F_{72}^-$.  Along a branch only the moduli change: as $\delta$ decreases from $1$ to $-1$, the coefficients on
$P$ shrink from $1/6$ to $0$ and those on $P'$ grow from $0$ to $1/6$.  At $\delta=-1$ a branch ends at
$\ket{\Psi_{-1}(\beta)}$ on the opposite face.  A common phase on $P'$, which a diagonal local unitary applies without
changing the coefficients on $E$, makes this state real, and for constant $\beta$ it becomes $\ket{\psi_-}$.  Up to diagonal
local unitaries, the eight branches thus end at four real states, each reached by one branch from each component:
$\ket{\psi_-}$ and the three states obtained from it by negating the coefficients on $P'_1$, $P'_2$ or $P'_3$ (classes
$9$, $9$, $10$, $9$ of Theorem~\ref{thm:classes}).

\emph{Uniform states.}  Halfway along a branch, at $\delta=0$, all $72$ coefficients have modulus $1/\sqrt{72}$.  For
constant $\beta=\pm\pi/2$ they are $\varepsilon_x$, $1$ and $\pm i$ times $1/\sqrt{72}$ on $E$, $P$ and $P'$.  In every
reshaping $W=\sqrt2\,U$ then has two entries from $\{\pm1,\pm i\}$ in each row and column and $WW^\dagger=2\,\mathbb I$:
a 2-unitary complex weighing matrix (Fig.~\ref{fig:amplitudes72}).  These uniform states are new: the AME$(4,6)$
states reported earlier have at least $112$ non-zero coefficients, and the real states of Refs.~\cite{Bru26,HZ26} have
$60$ and $96$ (Sec.~\ref{sec:discussion}).  Other uniform states of
$\mathcal F_{72}^+$ are almost real: up to a diagonal local unitary, $\mathcal F_{72}^+$ contains a uniform state with
coefficients $\pm1/\sqrt{72}$ except for $\pm i/\sqrt{72}$ on the four states of $P'_1$, and a path of AME states in
$\mathcal F_{72}^+$ joins it to $\ket\psi$ (Fig.~\ref{fig:families})~\cite{repo}.  Four imaginary coefficients are the
minimum.  If all coefficients of an AME state with support $E\cup P\cup P'$ lie in $\{\pm1,\pm i\}/\sqrt{72}$, then by
Eq.~\eqref{eq:congruence} every block contains an even number of imaginary coefficients, while $\omega\equiv\pm\pi/2$
requires an odd number among the $18$ states of $P\cup P'$ with $x_1\neq0$.  The smallest sets with both properties are
the six quartets~\cite{repo}.

The first-order deformations of an AME state, modulo the tangent space of its local-unitary orbit, form a
local-unitary invariant, in the spirit of the defect of a unitary matrix~\cite{TZ2008}.  Their dimension is $10$ for a generic member of $\mathcal F_{72}^\pm$, $11$ for a generic face
state, and $14$ at $\ket\psi$: $11$ directions along the face and $3$ spanned by the eight branches (numerical ranks,
with gaps of $10^{-15}$ against $10^{-2}$).  The real state is thus a singular point of the AME conditions, where the
$11$-dimensional face and both $10$-dimensional components meet (Fig.~\ref{fig:families}).

\section{Classification of the real states}\label{sec:real}

\emph{Where the real states lie.}  None of the real states lies in $\mathcal F_{72}^\pm$
(Proposition~\ref{prop:families}(d)), so they all lie on the two faces.  Up to diagonal local unitaries, $\mathcal F_{60}$ is
the $11$-torus of the phases $\alpha$ modulo a common phase and the sign change of $P_3$, which a diagonal local
unitary with entries $\pm1$ performs without changing the coefficients on $E$.  Its origin is $\ket\psi$.  By
Lemma~\ref{lem:blocks}, a face state is real up to diagonal local unitaries if and only if its phases are $0$ or $\pi$
up to a diagonal local unitary, i.e.\ if and only if it is a point of order at most two of this torus.  The real states of
$\mathcal F_{60}$ are therefore exactly its $2^{11}=2048$ points of order at most two: $1024$ with all
$\alpha_x\in\{0,\pi\}$, and $1024$ with $\alpha_x=\pm\pi/2$ on $P_3$ and $\alpha_x\in\{0,\pi\}$ on $P_1\cup P_2$.  On
$\mathcal F_{60}'$, with origin $\ket{\psi_-}$, the phases $\beta$ are determined modulo a common phase alone, and its
$2048$ real states are those with all $\beta_x\in\{0,\pi\}$.  The closures of $\mathcal F_{72}^\pm$ meet
$\mathcal F_{60}$ only in the $9$-torus $\mathcal C$ on which the quartet sums of $\alpha$ agree
(Proposition~\ref{prop:families}(c)), and $\mathcal F_{60}'$ only in the corresponding $9$-torus $\mathcal C'$.  The $2^9=512$ points of order at most two of each are the real states that are
limits of $\mathcal F_{72}^\pm$.  The other $3072$ real states lie only on the faces.  The eight straight branches through a
real state of $\mathcal C$ end at real states of $\mathcal C'$ only for the $128$ states of class~$1$ below, the class
of $\ket\psi$ (at states of classes $9$ and $10$), and for the other $384$ they end at non-real states~\cite{repo}.

Equivalently, a real state of $\mathcal F_{60}$ or $\mathcal F_{60}'$ is given by signs that solve the $36$ congruences of
the $E$-blocks over $\mathbb F_2$: on each face there are $2^{32}$ such sign patterns, and modulo
diagonal local unitaries they give the $2048$ real states.  Negating a coefficient $1/6$ keeps a state AME, since these are $1\times1$ blocks.  On each
face, the permutations of the parties and of the basis states $\ket0,\dots,\ket5$ of each party that map the support onto itself, and each officer (one or two basis states) onto an officer, form a
group of order $192$, with $16$ orbits on the real states of $\mathcal F_{60}$ ($8$ of size $64$, $8$ of size $192$) and
$24$ on those of $\mathcal F_{60}'$ ($4$ of size $32$, $20$ of size $96$).

\begin{theorem}\label{thm:classes}
The $4096$ real AME$(4,6)$ states with support $E\cup P$ or $E\cup P'$ fall into exactly $20$ classes under local
unitaries and permutations of the parties: $8$ on $\mathcal F_{60}$ and $12$ on $\mathcal F_{60}'$.  Their
representatives are listed in Table~\ref{tab:classes}.
\end{theorem}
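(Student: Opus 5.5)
The proof has two halves: an upper bound on the number of classes from explicit symmetries, and a lower bound from local-unitary invariants that separate the claimed classes.

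\emph{Upper bound.} By the discussion before the theorem, each face carries $2048$ real states up to diagonal local unitaries. The group of order $192$ of party and basis permutations preserving the support and its officers has $16$ orbits on $\mathcal F_{60}$ and $24$ on $\mathcal F_{60}'$. These orbits can be computed exactly by acting on the $\mathbb F_2$ sign patterns modulo the diagonal $\pm1$ gauge. That gives at most $40$ classes. Next I would look for further equivalences, that is, local unitaries outside the monomial group that relate orbits within a face or across faces. First I would use the $1\times1$ blocks: negating a coefficient $1/6$ is always allowed, but not gauge-trivially. Then I would try non-monomial local unitaries built from the $2\times2$ structure of the $E$-blocks, for example Hadamard-type rotations on $\{\ket1,\ket2\}$ combined with the $U(3)$ on $\{\ket3,\ket4,\ket5\}$. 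Where a candidate equivalence is suspected, I would find it numerically by minimizing $\|(u_1\otimes u_2\otimes u_3\otimes u_4)\ket{\phi}-\pi\ket{\phi'}\|$ over local unitaries and party permutations $\pi$. I would then round the optimizer to an exact unitary, typically with entries in $\mathbb Q(\sqrt2,i)$, and verify it exactly. This should merge the $40$ orbits into $20$, with $8$ classes on $\mathcal F_{60}$ and $12$ on $\mathcal F_{60}'$, and it fixes the representatives of Table~\ref{tab:classes}.

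\emph{Lower bound.} To show that the $20$ classes are distinct, I would compute invariants under local unitaries and party permutations for each representative. Natural choices are the spectra of the reduced operators, which are trivial here since the states are AME, so I would go to higher invariants. These include the multiset of traces $\operatorname{tr}\bigl[(\rho_{12}^{T_2})^k\bigr]$-type quantities, but better are the spectra of $U^R(U^R)^\dagger$-style products mixing reshapings, such as $\operatorname{tr}(U U^{R\dagger} U U^{R\dagger}\cdots)$, symmetrized over the $24$ party permutations. The finest easily computed invariant is the spectrum of $M=\sum_{\text{cuts}}$ of operators like $\operatorname{tr}_{34}(\ket\phi\bra\phi^{T_{2}})$, or the singular values of the $6\times216$ single-party flattening after contracting two copies. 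I expect the Hilbert-space overlap invariant $\max_{v}|\braket{v}{\phi}|$ over product states to be useful as well, but hard to certify. The invariants chosen must distinguish all $20$ classes, including the faces from each other, consistent with the already claimed inequivalence of $\ket\psi$ and $\ket{\psi_-}$.

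\emph{Main obstacle.} The main difficulty is certification. Polynomial invariants may coincide for some pairs of classes, and in that case I would fall back on a rigorous non-existence argument for the pair. This means showing that the minimum of $\min_{\pi}\|(u_1\otimes\cdots\otimes u_4)\ket{\phi}-\pi\ket{\phi'}\|$ over $U(6)^4$ is strictly positive. I would cover the compact group by boxes and bound the objective from below using interval arithmetic, pruning boxes with a Lipschitz bound. I would reduce the dimension first using the fact that any equivalence must map the rigid combinatorial data, such as classical officers and maximal product overlaps $1/6$, onto each other. Wherever possible I would prefer exact invariants with interval-certified eigenvalue gaps over this search, since the search is the costly step.
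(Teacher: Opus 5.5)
\textbf{Upper bound.} This half follows the paper. The paper also starts from the $16+24$ orbits of the order-$192$ monomial group. It merges them with exactly the kind of map you guess: the Hadamard rotation $H_{12}^{\otimes4}$ of $\ket1,\ket2$ on every party, followed by a monomial map. This leaves $8$ classes on $\mathcal F_{60}$ and $12$ on $\mathcal F_{60}'$, and no $U(3)$ part is needed.

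\textbf{Lower bound: the gap.} You never identify an invariant that separates the $20$ classes, and the candidates you list cannot do it.
\begin{itemize}
\item For every AME state $\rho_{12}=\mathbb I/36$, so $\rho_{12}^{T_2}=\mathbb I/36$ and all its traces are constant. Single-party spectra are constant as well.
\item Traces of words like $UU^{R\dagger}UU^{R\dagger}$ are not local-unitary invariants. The product contracts the party-$3$ index of $\psi$ with the party-$2$ index of $\bar\psi$, which produces factors $u_2^\dagger u_3$. Symmetrizing over party permutations does not repair this.
\item The maximal overlap with product states equals $1/6$ on all $4096$ states. Each state has $12$ coefficients $\pm1/6$, and $1/6$ is the bound for any AME$(4,6)$ state.
\item Genuine low-degree polynomial invariants are too coarse. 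The paper notes that $\operatorname{Tr}(J^TJ)^k$, $k\le8$, agree on classes $3$--$8$. The degree-$8$ invariant $K$ of Sec.~\ref{sec:discussion} takes only $10$ values on the $20$ classes.
\end{itemize}
Your fallback is not a workable certificate. It is an interval branch-and-bound for $\min\|(u_1\otimes\cdots\otimes u_4)\ket\phi-\pi\ket{\phi'}\|$ over the $144$-dimensional group $U(6)^4$, for up to $190$ pairs and $24$ permutations. The proposed reduction through product states of overlap $1/6$ first needs a proof that these are exactly the $12$ classical officers, which you do not give. Even then a continuous search would remain.

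\textbf{The missing idea.} It is a spectral invariant of the linearized AME conditions. Let $J$ be the real Jacobian of $\Phi\mapsto(\rho_A)_{|A|=2}$. A local unitary acts orthogonally on the state and on each marginal, and a party permutation permutes the marginals. Hence the spectrum of $J^TJ$ is invariant under both. Its low moments fail, as noted above. But the eigenvalue counts $n(t)$ of $72J^TJ$ below $t=149/8$ and $t=3/2$ already take pairwise different values on the $20$ representatives of Table~\ref{tab:classes}.

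\textbf{Certification.} This invariant is also cheap to certify, which resolves your main obstacle. $72J^TJ$ has entries in $\mathbb Z[\sqrt2]$ and splits into blocks of size at most $160$. For each block $G$, one evaluates $V^T(G-t\mathbb I)V$ in ball arithmetic, with $V$ the numerical eigenvectors. One checks that every Gershgorin interval excludes $0$, and Sylvester's law of inertia then gives the exact count. Without such an invariant your argument proves only the upper bound of $20$.
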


\begin{proof}
\emph{At most $20$.}  States in one orbit are related by monomial local unitaries and party permutations.  On each
face, orbits are further joined by the real local orthogonal map $H_{12}^{\otimes4}$, a $2\times2$ Hadamard rotation of
the basis states $\ket1,\ket2$ of every party, followed by a monomial map.  This leaves $8$ classes on $\mathcal F_{60}$ and
$12$ on $\mathcal F_{60}'$~\cite{repo}.\\
\emph{At least $20$.}  Local-unitary invariants have been used to separate AME states and their phase
families~\cite{RAL2022,RL2025}.  We use the spectrum of a Jacobian.  Let $J$ be the real Jacobian of the map $\Phi\mapsto(\rho_A)_{|A|=2}$ to all six two-party
marginals, and $n(t)$ the number of eigenvalues of $72J^TJ$ below $t$.  A local unitary acts orthogonally on the state
and on each marginal, and a party permutation permutes the marginals, so the spectrum of $J^TJ$, and hence $n(t)$, is
invariant under both.  The pairs $\bigl(n(149/8),n(3/2)\bigr)$ of the $20$ representatives are pairwise different
(Table~\ref{tab:classes}).  To certify them, note that $72J^TJ$ has entries in $\mathbb Z[\sqrt2]$ and splits into
blocks of size at most $160$.  For each block $G$ we form $X=V^T(G-t\mathbb I)V$ in interval (ball) arithmetic~\cite{Arb2017},
with $V$ the numerical eigenvectors of $G$, and check that every Gershgorin interval of $X$ excludes $0$.  By
Sylvester's law of inertia, the number of negative intervals is then the number of eigenvalues of $G$ below $t$.
\end{proof}

\begin{table*}[t]
\caption{The $20$ classes of Theorem~\ref{thm:classes}.  Face $\mathcal F_{60}$: $\ket\psi$ with the listed coefficients $1/6$
negated.  Face $\mathcal F_{60}'$: the same for $\ket{\psi_-}$, Eq.~\eqref{eq:psiminus}.  Limit: limits of $\mathcal F_{72}^\pm$.
Dim: first-order deformations modulo local unitaries (numerical).  $n(t)$: the certified invariants of the proof.}
\label{tab:classes}
\begin{ruledtabular}
\begin{tabular}{clcccccc}
class & face & flipped & states & limit & dim & $n(149/8)$ & $n(3/2)$\\
\hline
1 & $\mathcal F_{60}$ & -- & $128$ & yes & $14$ & $1048$ & $157$\\
2 & $\mathcal F_{60}$ & $0344$, $3505$ & $384$ & yes & $14$ & $1046$ & $157$\\
3 & $\mathcal F_{60}$ & $5303$ & $128$ & no & $11$ & $1050$ & $153$\\
4 & $\mathcal F_{60}$ & $5430$ & $128$ & no & $11$ & $1044$ & $152$\\
5 & $\mathcal F_{60}$ & $5045$ & $128$ & no & $11$ & $1053$ & $153$\\
6 & $\mathcal F_{60}$ & $0533$ & $384$ & no & $11$ & $1055$ & $153$\\
7 & $\mathcal F_{60}$ & $0455$ & $384$ & no & $11$ & $1052$ & $153$\\
8 & $\mathcal F_{60}$ & $0344$ & $384$ & no & $11$ & $1052$ & $152$\\
9 & $\mathcal F_{60}'$ & -- & $288$ & yes & $14$ & $1021$ & $157$\\
10 & $\mathcal F_{60}'$ & $0434$, $3055$ & $96$ & yes & $14$ & $1029$ & $173$\\
11 & $\mathcal F_{60}'$ & $0434$, $5340$ & $96$ & yes & $14$ & $1025$ & $159$\\
12 & $\mathcal F_{60}'$ & $0353$, $4044$ & $32$ & yes & $14$ & $985$ & $155$\\
13 & $\mathcal F_{60}'$ & $0353$ & $288$ & no & $11$ & $1004$ & $152$\\
14 & $\mathcal F_{60}'$ & $0434$ & $288$ & no & $11$ & $1032$ & $160$\\
15 & $\mathcal F_{60}'$ & $0545$ & $288$ & no & $11$ & $1022$ & $152$\\
16 & $\mathcal F_{60}'$ & $3055$ & $288$ & no & $11$ & $1026$ & $152$\\
17 & $\mathcal F_{60}'$ & $0434$, $5405$ & $96$ & no & $11$ & $1038$ & $168$\\
18 & $\mathcal F_{60}'$ & $5033$ & $96$ & no & $11$ & $1020$ & $152$\\
19 & $\mathcal F_{60}'$ & $5340$ & $96$ & no & $11$ & $1012$ & $152$\\
20 & $\mathcal F_{60}'$ & $5405$ & $96$ & no & $11$ & $1030$ & $152$\\
\end{tabular}
\end{ruledtabular}
\end{table*}

Simpler invariants do not separate the classes: the power traces $\mathrm{Tr}(J^TJ)^k$, $k\le8$, are equal on
classes $3$--$8$ (numerical).
The real states are isolated: modulo local orthogonal maps, no continuous family of real AME states passes through
any of them.  Most orbits have no real first-order deformation.  For the remaining six (two on $\mathcal F_{60}$, four
on $\mathcal F_{60}'$), the two real first-order deformations are obstructed at second order (numerical)~\cite{repo}.
Within the families of Sec.~\ref{sec:families} the real states are therefore isolated points of complex continua.

The deformation dimensions also separate our states from the known solutions: $14$ and $11$ for the real classes and
$10$ for $\mathcal F_{72}^\pm$, against $4$ for the golden state of Ref.~\cite{RBB22} and $3$ for the 2-unitary Hadamard
matrix $H(0)$ of Ref.~\cite{BZ24}.  None of our states is locally equivalent to those (numerical ranks, with the same gaps as above).

\section{Discussion}\label{sec:discussion}

The real states have $60$ non-zero coefficients.  Until the announcements of Refs.~\cite{Bru26,Rat26,HZ26} the sparsest
reported AME$(4,6)$ state was the golden state with $112$~\cite{RBB22,RRKL23}, whose representation of numerically minimized R\'enyi-2
decomposition entropy has $144$~\cite{Ram25}.  Of the real states announced in parallel, that of Ref.~\cite{HZ26} has
$96$ and that of Ref.~\cite{Bru26} also $60$.  The
argument of Sec.~\ref{sec:state} also bounds the support of any AME$(4,6)$ state from below.  A coefficient of modulus
$1/6$ exhausts the norm of its row and column in every reshaping, so the classical officers form a code of length $4$
and minimum distance $3$ over six symbols.  Such a code has at most $34$ codewords~\cite{Parker1975}: $36$ would be a pair of orthogonal
Latin squares of order six, and $35$ complete uniquely to $36$.  Every other row needs at least two entries, so the
support is at least $72-t\ge38$, where $t$ is the number of classical officers.  Conversely, a state with $s<72$ non-zero
coefficients has at least $72-s$ classical officers.  All $36$ officers can be entangled only from support $72$ on, as in the
uniform states of Sec.~\ref{sec:families}, and the $12$ classical officers of our $60$-term states are the minimum for their
support.  Whether the gap between $38$ and $60$ can be closed is open.

\emph{Other real solutions.}  Bruzda's real state~\cite{Bru26} is locally equivalent to $\ket\psi$.  Its variant with
one coefficient negated~\cite{HZ26} is $\ket\psi$ with the classical officer $0455$ negated, the representative of
class~$7$, and the one-parameter phase family of Ref.~\cite{HZ26} is a curve in $\mathcal F_{60}$.  The 96-term real
state of Ref.~\cite{HZ26}, invariant under a simultaneous cyclic shift of all four parties, is of a different kind.
The degree-8 invariant $K$ of Ref.~\cite{HZ26}, evaluated on the tensor $6\Phi_{ijkl}$, the entries of the 2-unitary of a
state $\ket\Phi$, is $108$ for it, while it is $72$
on classes $1$--$2$, $40$ on classes $3$--$8$ and at most $0$ on classes $9$--$20$, so that state is not equivalent
to any of our real states.  ($K$ takes only $10$ distinct values on our $20$ classes.)  Its deformation dimension
is $3$, while every state of our families has at least $10$, so it lies outside both families.  It also has no
classical officer in any local basis: its overlap with every product state is below $0.98/6$, so by the
argument above each of its local representatives has at least $72$ non-zero coefficients (both numerical).  The
statements about the states of Refs.~\cite{Bru26,HZ26} themselves were checked with the tools of Ref.~\cite{repo} but are
not among its scripts, which cover our states and the values of $K$ on them.  Real
coefficients alone therefore do not produce classical officers.  In our states they come from the faces $\mathcal F_{60}$ and $\mathcal F_{60}'$.

Further questions remain.  Real AME$(4,6)$ states outside these families exist~\cite{HZ26}.  Whether other real states
have classical officers is open.  Do our real states admit a form with all amplitudes of modulus $1/36$, i.e.\ a
real-equivalent 2-unitary Hadamard matrix?  Such a form would have
to keep the $12$ product states of maximal overlap $1/6$.  The eigenspace of $u^{\otimes4}$ also contains AME states
outside the families: about one in eight search hits lies on a $3$-parameter family of complex AME states with $348$
of the $360$ coefficients non-zero and deformation dimension $3$, which we could not bring to a sparse or real form
(numerical).  Whether it is related to the Hadamard family of Ref.~\cite{BZ24} or to the real state of
Ref.~\cite{HZ26} is open.  Both $H(0)$ and that state also have deformation dimension $3$.  Finally, a real 2-unitary of
order $36$ is a real perfect tensor: a real dual-unitary gate, and a building block of holographic codes with real
amplitudes~\cite{PYHP2015,BKP2019}.

\begin{acknowledgments}
This work was supported by the QuantERA II Programme (VERIqTAS project), which has received funding from the European Union’s Horizon 2020 research and innovation programme under Grant Agreement No. 101017733. We also acknowledge funding from the European Union’s Horizon Europe research and innovation programme under Grant Agreement No. 101080086 (NeQST). We thank Wojciech Bruzda and Remigiusz Augusiak for many fruitful discussions and guidance.  We are grateful to
Wojciech Bruzda, and to Mykhailo Hontarenko and Karol \.Zyczkowski, for sharing their independent real solutions with
us before publication, which made the comparison in Sec.~\ref{sec:discussion} possible.
\end{acknowledgments}

\paragraph*{Data and code availability.}  The state, the families and the class representatives, scripts that verify
every statement about them, and scripts that reproduce the two steps of the search are available in
Ref.~\cite{repo}.

\paragraph*{Use of AI tools.}  Anthropic's Claude models were used as assistants throughout this
work: for writing and checking the code of Ref.~\cite{repo}, for generating the tables and figures from the data, and for drafting and editing the text.  Every statement in the paper is verified by those scripts, and the author has reviewed all content and takes full responsibility for it.

\twocolumngrid
\bibliography{references}
\end{document}